\newif\ifarxiv
\newif\ifonecolumn
\newcolumntype{?}{!{\vrule width 1.1pt}}
\newtheorem{proposition}[]{Proposition}
\theoremstyle{remark}
\definecolor{green}{rgb}{0.0, 0.5, 0.0} 
\newcolumntype{?}{!{\vrule width 1pt}}
\DeclareAcronym{snr}{
  short = SNR,
  long = signal-to-noise ratio,}
  \DeclareAcronym{pdf}{
	short = PDF,
	long = probability density function,}
  \DeclareAcronym{sar}{
  short = SAR,
  long = synthetic aperture radar,}
\DeclareAcronym{insar}{
  short = InSAR,
  long = interferometric synthetic aperture radar,}
\DeclareAcronym{ao}{
	short = {AO},
	long = {alternating optimization},
	long-plural-form = {alternating optimizations}
}
\DeclareAcronym{mimo}{
	short = {MIMO},
	long = {multiple-input multiple-output}
}
\DeclareAcronym{uav}{
        short = {UAV},
        long = {unmanned aerial vehicle},
        long-plural-form = {unmanned aerial vehicles}
}
\DeclareAcronym{fdma}{
	short = {FDMA},
	long = {frequency-division multiple-access},
}
\DeclareAcronym{1d}{
	short = {1D},
	long = {one-dimensional},
}
\DeclareAcronym{islr}{
	short = {ISLR},
	long = {integrated sidelobe ratio},
}
\DeclareAcronym{pslr}{
	short = {PSLR},
	long = {peak sidelobe ratio},
}
\DeclareAcronym{3d}{
        short = {3D},
        long = {three-dimensional},
}
\DeclareAcronym{pso}{
	short = {PSO},
	long = {particle swarm optimization},
}
\DeclareAcronym{2d}{
        short = {2D},
        long = {two-dimensional},
}
\DeclareAcronym{dem}{
        short = {DEM},
        long = {digital elevation model},
}
\DeclareAcronym{gs}{
        short = {GS},
        long = {ground station},
        long-plural-form = {ground stations}
}
\DeclareAcronym{los}{
        short = {LOS},
        long = {line-of-sight},
}
\DeclareAcronym{sca}{
        short = {SCA},
        long = {successive convex approximation},
}
\DeclareAcronym{nesz}{
        short = {NESZ},
        long = {noise equivalent sigma zero},
}
\DeclareAcronym{wrt}{
        short = {w.r.t.},
        long = {with respect to },
}
\DeclareAcronym{rhs}{
        short = {r.h.s},
        long = {right-hand side },
}
\DeclareAcronym{gmti}{
	short = {GMTI},
	long = {ground moving target indication},
}
\DeclareAcronym{lhs}{
        short = {l.h.s},
        long = {left-hand side },
}
\DeclareAcronym{bcd}{
	short = {BCD},
	long = {block coordinate descent},
}
\DeclareAcronym{hoa}{
        short = {HoA},
        long = {height of ambiguity},
}
\begin{document}
\title{ Sensing Accuracy Optimization for Communication-assisted Dual-baseline UAV-InSAR \\\vspace{-2mm}
\thanks{  This work was supported in part by the Deutsche Forschungsgemeinschaft (DFG, German Research Foundation) GRK 2680 – Project-ID 437847244.}
}
\author{\IEEEauthorblockN{Mohamed-Amine~Lahmeri\IEEEauthorrefmark{1}, Víctor Mustieles-Pérez\IEEEauthorrefmark{1}\IEEEauthorrefmark{2}, Martin Vossiek\IEEEauthorrefmark{1}, Gerhard Krieger\IEEEauthorrefmark{1}\IEEEauthorrefmark{2}, and
Robert Schober\IEEEauthorrefmark{1}}\\ \vspace{-2mm}
\IEEEauthorblockA{\IEEEauthorrefmark{1}Friedrich-Alexander-Universit\"at Erlangen-N\"urnberg (FAU), Germany\\
\IEEEauthorrefmark{2}German Aerospace Center (DLR),  Microwaves and Radar Institute, Weßling, Germany\\
\vspace{-8mm}}}
\maketitle
\begin{abstract} 
In this paper, we study the optimization of {the} sensing accuracy of unmanned aerial vehicle (UAV)-based dual-baseline interferometric synthetic aperture radar (InSAR) systems. A swarm of three UAV-synthetic aperture radar (SAR) systems is deployed to image an area of interest from  different angles, enabling the creation of two independent digital elevation models (DEMs). To reduce the InSAR sensing error, i.e., the height estimation error,  the two DEMs are fused based on weighted {averaging} techniques into one final DEM. The heavy computations {required for this process} are performed on the ground. {To this end}, the radar data is offloaded in real time via a frequency division multiple access (FDMA) air-to-ground backhaul link. In this work, we focus on {improving the sensing accuracy by minimizing the worst-case height estimation error of the final DEM. {To this end}, the UAV formation and the power allocated for offloading are jointly optimized based on alternating optimization (AO), while meeting practical InSAR sensing and communication constraints.} Our simulation results demonstrate that the proposed solution can {significantly improve the sensing accuracy compared to classical single-baseline UAV-InSAR systems and other benchmark schemes. }
\end{abstract}
\section{Introduction}
The use of \ac{uav} swarms {for} remote sensing has {recently} gained attention due to their flexibility and efficiency in data collection tasks \cite{data_collection}. This has led to an increased use of drones in diverse applications, such as mapping, monitoring traffic, and addressing climate change \cite{climate_change}. {For} these tasks, a {variety} of sensors can be deployed onboard, {including} cameras, LiDARs, and radars. In particular, the {deployment} of \ac{sar} on \acp{uav} {has} attracted significant {interest} due to  its ability to provide {very} high-resolution {\ac{sar} images over local areas}, even {under} challenging conditions, {overcoming} the limitations of traditional airborne and spaceborne systems. This integration has sparked multiple recent studies focusing on system design \cite{3D_sensing}, trajectory and resource allocation optimization \cite{amine1,amine2,sun2}, and experimental measurement campaigns for \ac{uav}-\ac{sar} systems \cite{experimental}.\par
An interesting remote sensing application {of} \ac{uav} swarms is \ac{3d} radar imaging, which can be {realized} using techniques, such as \ac{mimo} radar, tomography, and interferometry \cite{cramer}. In particular, \ac{insar} leverages the phase differences between at least two \ac{sar} images, captured from different angles, to extract {topographic information and generate \acp{dem}}. Key performance metrics in interferometry include \ac{snr},  coverage,  coherence, \ac{hoa}, and height error \cite{coherence1,snr_equation}, which are affected by the interferometric baseline, i.e., the distance between the sensing platforms. While \ac{insar} has been extensively studied for spaceborne and airborne platforms \cite{cramer}, the optimization of \ac{insar} performance for \ac{uav}-based systems remains largely unexplored. In our recent research work \cite{amine3}, we investigated formation and resource allocation optimization for maximizing the \ac{insar} coverage, but for a single-baseline \ac{uav}-\ac{insar} system. { In contrast, dual-baseline \ac{insar} systems {offer} advantages, such as enhanced phase unwrapping and improved sensing accuracy \cite{fusion_sigma_h}. However, results from  single-baseline systems \cite{amine3} do not apply to {dual-baseline systems} due to the different {expressions} for the height error and the use of multiple  acquisition geometries.} \par
In this work, we study a dual-baseline \ac{uav}-based \ac{insar} system, where a swarm consisting of one master and two slave \acp{uav} is deployed to generate two independent \acp{dem} of a target area, which are {then} fused into a single {\ac{dem}} based on weighted averaging \cite{fusion_sigma_h}. Additionally, the radar data is offloaded to the ground in real time. We  investigate the joint optimization of the \ac{uav} formation and communication power allocation for minimization of the { worst-case} height error {in} the final \ac{dem} under communication and sensing constraints. Our main contributions can be summarized as follows:\begin{itemize}
\item {We propose an approximate bi-static \ac{snr} expression valid for the considered sensing application.}
\item Based on the Cramér–Rao bound of the phase error, we derive a tractable upper bound for the complex expression of the height error of the final \ac{dem}.
\item We formulate and solve a joint optimization problem for \ac{uav} formation and communication power allocation to minimize the derived upper bound on the height error, while satisfying sensing and communication constraints. 
\item Our simulation results demonstrate the effectiveness of the considered dual-baseline \ac{insar} system compared to single-baseline systems and other benchmark schemes.
\end{itemize}
{\em Notations}:
In this paper, lower-case letters $x$ refer to scalar {variables}, while boldface lower-case letters $\mathbf{x}$ denote vectors.  $\{a, ..., b\}$ denotes the set of all integers between $a$ and $b$.  $|\cdot|$ denotes the absolute value operator. $\mathbb{R}^{N}$ represents the set of all $N$-dimensional vectors with real-valued entries. For a vector {$\mathbf{x}=(x_1,...,x_N)^T\in\mathbb{R}^{N}$}, $||\mathbf{x}||_2$ denotes the Euclidean norm, whereas  $\mathbf{x}^T$ stands for the  transpose of $\mathbf{x}$.  For real-valued multivariate functions $f(\mathbf{x})$, $\frac{\partial f}{\partial \mathbf{x}}(\mathbf{a})=\Big(\frac{\partial f}{\partial x_1}(\mathbf{a}),...,\frac{\partial f}{\partial x_N}(\mathbf{a})\Big)^T$ denotes the partial derivative of  $f$ \ac{wrt} $\mathbf{x}$ evaluated for an arbitrary vector $\mathbf{a}$. For any {Boolean} expression $\mathcal{S}$, $\mathds{1}\{\mathcal{S}\}$ denotes the indicator function, which equals 1 if $\mathcal{S}$ is {true and 0 otherwise.}
\section{System Model} \label{Sec:SystemModel}
We consider three rotary-wing \acp{uav}, {denoted} by $U_k, k \in \{0,1,2\}$, performing \ac{insar} sensing over a target area. $U_0$, the master drone, transmits and receives radar echoes, while $U_1$ and $U_2$, the slave drones, only receive. We use a \ac{3d} coordinate system, where the  $x$-,  $y$-, and $z$-{axes} represent the range {direction}, the azimuth {direction}, and the altitude, respectively. The mission time $T$ is divided into $N$ slots {of duration $\delta_t$}, with $T = N\cdot\delta_t$. The drone swarm {forms} a dual-baseline interferometer with two independent observations {acquired by} $(U_0, U_1)$ and $(U_0, U_2)$, {respectively}.  The considered \ac{uav}-\ac{sar} systems operate in stripmap mode \cite{book1} and fly at a constant velocity, $v_y$, following a linear trajectory  that is parallel to a line, denoted by $l_t$, {which is} parallel to {the} $y$-axis and {passes} in time slot $n$ through reference point $\mathbf{p}_t[n]=(x_t,y[n],0)^T \in \mathbb{R}^3$, see Figure \ref{fig:system-model}. The position of $U_k$ {in} time slot $n \in \{1,...,N\}$ is $\mathbf{q}_k[n]=(x_k,y[n],z_k)^T$, with the $y$-axis position vector $\mathbf{y}=(y[1]=0,y[2], ..., y[N])^T\in\mathbb{R}^{N}$ given by: 
 \begin{align}
 y[n+1]=y[n]+v_y\delta_t, \forall n \in \{1,N-1\}.
 \end{align}
For simplicity, we denote the position of $U_k$ in the across-track plane (i.e., $xz-$plane) by $\mathbf{q}_k=(x_k,z_k)^T \in \mathbb{R}^2, \forall k \in \{ 0,1,2\}$. The interferometric baseline, $b_k$, which refers to the distance between sensors $U_0$ and $U_k$, is given by:
\begin{align}
   b_k(\mathbf{q}_0,\mathbf{q}_k) = ||\mathbf{q}_k -\mathbf{q}_0||_2, \forall k \in \{1,2\}.
\end{align}
 The perpendicular baseline, {denoted by $b_{\bot,k}$,} is the magnitude of the projection of $U_k$'s baseline vector perpendicular to $U_0$'s \ac{los} to $\mathbf{p}_t[n]$ and is given by:
\begin{equation}
	b_{\bot,k}(\mathbf{q}_0,\mathbf{q}_k)=
	b_k(\mathbf{q}_0,\mathbf{q}_k)  \cos\Big(\theta_0- \alpha_k(\mathbf{q}_0,\mathbf{q}_k )\Big), \forall k \in \{1,2\},
\end{equation} 
where $\theta_0$ is the fixed look angle that $U_0$'s \ac{los} has with the vertical, and $\alpha_k$ is the angle between the interferometric baseline $b_k$ and the horizontal plane. 
\subsection{\ac{insar} Performance}
Next, we introduce the relevant \ac{insar} sensing performance metrics.
\subsubsection{\ac{insar} Coverage}
Let $r_k, k \in \{0,1,2\}$, {denote} $U_k$'s slant range \ac{wrt} $ \mathbf{p}_t[n]$. {The} slant range is independent of time and is given by:
\begin{align}
	r_k(\mathbf{q}_k)= \sqrt{ (x_k -x_t)^2 +  z_k^2  }, \forall k \in \{0,1,2\}.
\end{align}The radar swath is designed to be centered \ac{wrt} $l_t$. To this end, the look angle of the slave \acp{uav}, denoted by $\theta_k(\mathbf{q}_k), k \in \{1,2\}$, is adjusted {such that} the beam footprint is centered around $\mathbf{p}_t$, i.e., $\theta_k(\mathbf{q}_k)= \arctan\left(\frac{x_t-x_k}{z_k}\right)$. The swath width of $U_k$ can be approximated as follows \cite{book1}:
\begin{align}\label{eq:swath_width}
	S_k(\mathbf{q}_k) = \frac{\Theta_{\rm 3 dB}r_k(\mathbf{q}_k) }{\cos(\theta_k(\mathbf{q}_k))},\forall k \in \{0,1,2\},
\end{align}
where   $\Theta_{\mathrm{3dB}}$ is the -3 dB beamwidth in elevation.
\begin{figure}
	\centering
	\ifonecolumn
	\includegraphics[width=4in]{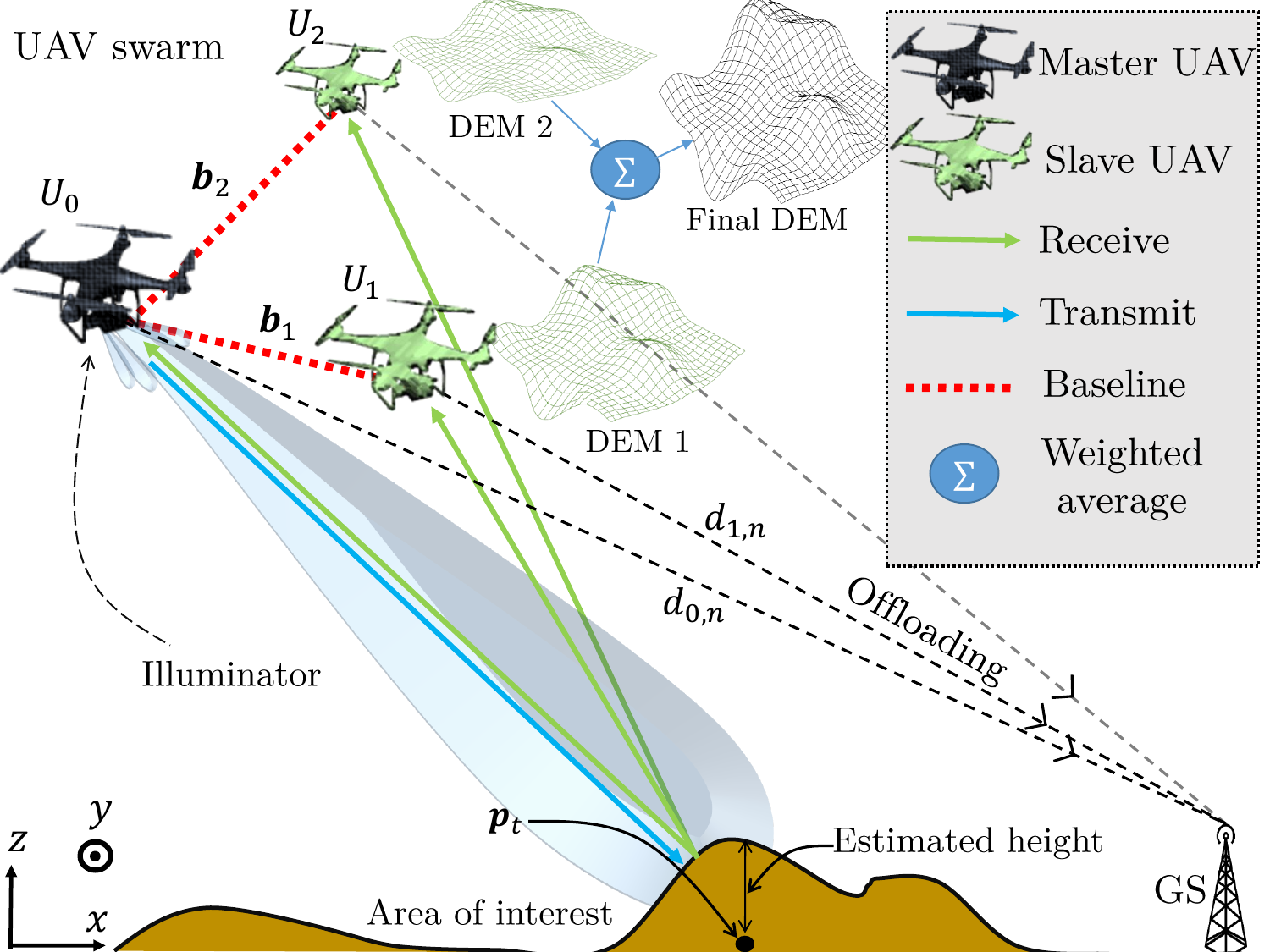}
	\else
	\includegraphics[width=0.8\columnwidth]{figures/SystemModel.pdf}
	\fi
	\caption{ Dual-baseline \ac{insar} sensing system {comprising} one master {and} two slave \ac{uav}-\ac{sar} systems {as well as} a \ac{gs} for real-time data offloading.}
	\label{fig:system-model}
\end{figure} 
\subsubsection{\Ac{insar} Coherence} A key performance metric for \ac{insar} is coherence, representing the cross-correlation between two \ac{sar} images. For {the} {images acquired by $(U_0,U_k), k \in \{1,2\}$,} the total coherence can be decomposed into {several} decorrelation sources as follows: 
\begin{equation}
	\gamma_{k}(\mathbf{q}_0,\mathbf{q}_k)= \gamma_{\mathrm{Rg},k}(\mathbf{q}_k) \gamma_{\mathrm{SNR},k}(\mathbf{q}_0,\mathbf{q}_k)\gamma_{\rm other}, \forall k \in \{1,2\},
\end{equation} where $\gamma_{\mathrm{Rg},k}$ is the baseline  decorrelation, $\gamma_{\mathrm{SNR},k}$ is the \ac{snr} decorrelation, and $\gamma_{\rm other}$ represents the contribution from all other decorrelation sources. The \ac{snr} decorrelation of pair $(U_0,U_k)$ is affected by {the} \acp{snr} {of both \acp{uav}} and is given by  \cite{snr_equation}:
 \begin{align} \label{eq:snr_decorrelation}
	&\gamma_{\mathrm{SNR},k}(\mathbf{q}_0,\mathbf{q}_k)= \frac{1}{\sqrt{1+\mathrm{SNR}^{-1}_{0}(\mathbf{q}_0)}}\frac{1}{\sqrt{1+\mathrm{SNR}^{-1}_{k}(\mathbf{q}_0,\mathbf{q}_k)}}, 
\end{align}
where $\mathrm{SNR}_0$ denotes {the \ac{snr} of the mono-static acquisition by $U_0$}  given by \cite{snr_equation}:
\begin{equation}\label{eq:monostatic_snr}
\mathrm{SNR}_{0}(\mathbf{q}_0)=\frac{\gamma_m}{ r_0^3(\mathbf{q}_0)},
  \end{equation}
where $\gamma_m=\frac{\sigma_0 P_t\; G_t\; G_r \lambda^3 c \tau_p \mathrm{PRF} }{4^4 \pi^3  v_y \sin(\theta_0) k_b T_{\mathrm{sys}}  \; B_{\mathrm{Rg}} \; F \; L}$. Here, $\sigma_0$ is the normalized backscatter coefficient, $P_t$ is the radar transmit power, $G_t$ and $G_r$ are the transmit and receive antenna gains, respectively, $\lambda$ is the radar wavelength, $c$ is the speed of light, $\tau_p$ is the pulse duration, $\mathrm{PRF}$ is the pulse repetition frequency, $k_b$ is the Boltzmann constant, $T_{\mathrm{sys}}$ is the receiver temperature, $B_{\mathrm{Rg}}$ is the bandwidth of the radar pulse, $F$ is the noise figure, and  $L$ represents the total radar losses. {The} derivation of the bi-static \ac{snr} for $U_k$, denoted by $\mathrm{SNR}_k$,  is more complicated. {Here}, {assuming a small bi-static angle $|\theta_0-\theta_k|$, which holds for \ac{insar} applications \cite{cramer},} we propose the following approximation\footnote{
	\ifarxiv Please find {a} detailed {derivation of} the {approximated} bi-static \ac{snr} expression in Appendix \ref{app:bistatic}.
	\else 
	 Please find a detailed {derivation of} the {approximated} bi-static \ac{snr} expression in the arxiv version of this paper \cite{arxiv}.\fi}:
\begin{equation}\label{eq:bistatic_snr}
\mathrm{SNR}_{k}(\mathbf{q}_0,\mathbf{q}_k)\approx \frac{\gamma_{m}}{ r_0^2(\mathbf{q}_0)r_k(\mathbf{q}_k)}, \forall k \in \{1,2\}.
\end{equation}
Furthermore, the baseline decorrelation reflects the loss of coherence caused by the different angles used for the acquisition of {both {\ac{insar}} images} \cite{victor2}: 
\begin{equation} \gamma_{\mathrm{Rg},k}(\mathbf{q}_k)=\frac{1}{B_p} \left[ \frac{2+B_p}{1+\mathcal{X}(\mathbf{q}_k)} -\frac{2-B_p}{1+\mathcal{X}^{-1}(\mathbf{q}_k)} \right],\label{eq:baseline_decorrelation}
\end{equation}
where $B_{ p}=\frac{B_{\mathrm{Rg}}}{f_0}$ is the fractional bandwidth, $f_0$ is the radar center frequency, and function $\mathcal{X}$ is {given} by \cite{victor2}:
\ifonecolumn 
\begin{equation}
\mathcal{X}(\mathbf{q}_k)=\frac{2 \Big( \sin(\theta_0) \mathds{1}\{\theta_0>\theta_k(\mathbf{q}_k)\} + \sin(\theta_k(\mathbf{q}_k))\mathds{1}\{\theta_0\leq\theta_k(\mathbf{q}_k)\}\Big)}{\sin(\theta_0)+\sin(\theta_k(\mathbf{q}_k))}.\label{eq:baseline_decorrelation_X}
\end{equation} 
\else
\begin{align}
	\adjustbox{width=1\columnwidth}{ $\mathcal{X}(\mathbf{q}_k)=\frac{ \sin(\theta_0) \mathds{1}\{\theta_0>\theta_k(\mathbf{q}_k)\}+ \sin(\theta_k(\mathbf{q}_k))\mathds{1}\{\theta_0\leq\theta_k(\mathbf{q}_k) \}}{ \frac{1}{2} \left(\sin(\theta_0)+\sin(\theta_k(\mathbf{q}_k))\right)} \label{eq:baseline_decorrelation_X}$}
\end{align}
\fi

\subsubsection{Height of Ambiguity (HoA)} The \ac{hoa} is related to the sensitivity of the radar system to topographic height variations \cite{snr_equation}. {The \ac{hoa} of pair $(U_0,U_k)$  is given by \cite{snr_equation}}: 
\begin{align}\label{eq:height_of_ambiguity}
    h_{\mathrm{amb},k}(\mathbf{q}_0,\mathbf{q}_k)=\frac{\lambda r_0(\mathbf{q}_0) \sin(\theta_0)}{ b_{\perp,k}(\mathbf{q}_0,\mathbf{q}_k)}, \forall k \in \{1,2\}.
\end{align}
\subsubsection{{\ac{dem}} Height Accuracy}
The height error of {the \ac{dem} {acquired by} the \ac{insar}} pair $(U_0,U_k)$ is given by \cite{snr_equation}:
\begin{equation} \label{eq:height_error}
	\sigma_{h_k}(\mathbf{q}_0,\mathbf{q}_k) =h_{\mathrm{amb},k}(\mathbf{q}_0,\mathbf{q}_k) \frac{\sigma_{\Phi_k}(\mathbf{q}_0,\mathbf{q}_k) }{2\pi}, \forall k \in \{1,2\},
\end{equation}
 where {$\sigma_{\Phi_k}$} is the random error in the interferometric phase and can be approximated in the case of high interferometric coherences by the Cramér–Rao bound  \cite{cramer}: 
\begin{equation}\label{eq:phase_error}
	\sigma_{\Phi_k}(\mathbf{q}_0,\mathbf{q}_k)=\frac{1}{\gamma_{k}(\mathbf{q}_0,\mathbf{q}_k)} \sqrt{ \frac{1-\gamma_{k}^2(\mathbf{q}_0,\mathbf{q}_k)}{2n_L}}, \forall k \in \{1,2\},
\end{equation}
{where $n_L$ is the number of independent looks {employed, i.e.,} {$n_L$} adjacent pixels of the interferogram {are averaged} to improve phase estimation \cite{cramer}.} The fusion of the two independent \ac{insar} {\acp{dem}} is performed based on inverse-variance weighting, such that the {height of} an arbitrary target {estimated} by \ac{insar} pair $(U_0,U_k)$, denoted by $h_k$, {is} weighted by $w_k(\mathbf{q}_0,\mathbf{q}_k)=\frac{1}{	\sigma^2_{h_k}(\mathbf{q}_0,\mathbf{q}_k)}$, $k \in\{1,2\}$, and {averaged} as $\frac{h_1 w_1+h_2 w_2}{w_1+w_2}$. The final height error of the fused \ac{dem} is {characterized} by \cite{fusion_sigma_h}: 
\begin{align}
	\sigma_{h}(\mathbf{q}_0,\mathbf{q}_1,\mathbf{q}_2)= \sqrt{\frac{\sum\limits_{k \in \{1,2\}}w_k^2(\mathbf{q}_0,\mathbf{q}_k)\sigma^2_{h_k}(\mathbf{q}_0,\mathbf{q}_k)}{\left(\sum\limits_{k \in \{1,2\}}w_k(\mathbf{q}_0,\mathbf{q}_k)\right)^2}}. \label{eq:final_height_error}
\end{align}
\subsection{Communication Performance}
We consider real-time offloading of the radar data to a \ac{gs}, where the master and slave \acp{uav} employ \ac{fdma}. The  instantaneous   communication transmit power  consumed by \ac{uav} $U_k$ is given by $\mathbf{P}_{\mathrm{com},k}=(P_{\mathrm{com},k}[1],...,P_{\mathrm{com},k}[N])^T \in \mathbb{R}^N, k \in \{0,1,2\}$.
We denote the location of the \ac{gs} by $\mathbf{g}= (g_x, g_y, g_z)^T \in \mathbb{R}^3$ and the distance from $U_k$ to the \ac{gs} by    $d_{k,n}(\mathbf{q}_k) = ||\mathbf{q}_k[n]-\mathbf{g} ||_2, \forall k \in \{0,1,2\}, \forall n.$ Thus, {adopting} the free-space path loss model and \ac{fdma}, the
instantaneous throughput  from $U_k, \forall k\in \{0,1,2\},$ to the \ac{gs} is given by:
\begin{align}
 &R_{k,n}(\mathbf{q}_k,\mathbf{P}_{\mathrm{com},k})= B_{c,k} \; \log_2\left(1+\frac{P_{\mathrm{com},k}[n] \;\beta_{c,k}}{d_{k,n}^2(\mathbf{q}_k)}\right), \forall n,
\end{align}
where  {$B_{c,k}$ is $U_k$'s fixed} communication bandwidth and $\beta_{c,k}$ is {the} reference channel gain\footnote{The reference channel gain is the channel power gain at a reference distance of 1 m.} divided by the noise variance. 
\section{Problem Formulation}
In this paper, we aim to minimize the height error of the final \ac{dem} $\sigma_h$ by jointly optimizing the \ac{uav} formation $\mathcal{Q}=\{\mathbf{q}_k,\forall k \in \{0,1,2\}\}$ and the instantaneous communication transmit powers  $\mathcal{P}=\{\mathbf{P}_{\mathrm{com},k}, \forall k \in \{0,1,2\}\}$, while satisfying communication and sensing quality-of-service constraints. To this end, we formulate  the following optimization problem: 
\begin{alignat*}{2} 
&(\mathrm{P}):\min_{\mathcal{Q},\mathcal{P}} \hspace{3mm}  \sigma_{h}(\mathcal{Q})   & \qquad&  \\
\text{s.t.} \hspace{3mm} &  \mathrm{C1: } \; z_{\mathrm{min}} \leq z_k \leq z_{\mathrm{ max}}, \forall k  \in \{0,1,2\},               &      &  \\ & \mathrm{C2}: \;  x_0=x_t - z_0 \tan(\theta_0),             &      &  \\& \mathrm{C3}: \;  \theta_{\mathrm{min}} \leq \theta_k(\mathbf{q}_k) \leq \theta_{\mathrm{max}}, \forall k \in \{1,2\},            &      &  \\&  \mathrm{C4}:    ||\mathbf{q}_i-\mathbf{q}_j||_2 \geq d_{\mathrm{min}},\forall i \neq j  \in \{0,1,2\},  &      &     
 \\
 &  \mathrm{C5}:  S_k(\mathbf{q}_k)\geq S_{\rm min}, \forall k \in \{0,1,2\},        &      &     
 \\
  &  \mathrm{C6}:  \gamma_{\mathrm{SNR},k}(\mathbf{q}_0,\mathbf{q}_k)\geq \gamma_{\rm SNR}^{\mathrm{min}}, \forall k \in \{1,2\},        &      &     
 \\  &  \mathrm{C7}:  \gamma_{\mathrm{Rg},k}(\mathbf{q}_k)\geq \gamma_{\rm Rg}^{\mathrm{min}}, \forall k \in \{1,2\},             &      &     
 \\
 &    \mathrm{C8}: \;  h_{\mathrm{amb},k}(\mathbf{q}_0,\mathbf{q}_k)\geq  h_{\mathrm{amb}}^{\mathrm{min}}, \forall k \in \{1,2\},             &      & \\
& \mathrm{C9}: 0 \leq P_{\mathrm{com},k}[n]  \leq P_{\mathrm{com}}^{\mathrm{max}}, \forall \; k \in \{0,1,2\}, \forall n,    & &\\
& \mathrm{C10}: R_{k,n}(\mathbf{q}_k,\mathbf{P}_{\mathrm{com},k}) \geq R_{\mathrm{min},k}, \forall \; k \in \{0,1,2\}, \forall n,       & &  \\
& \mathrm{C11}: 	\sum_{n=1}^{N} P_{\mathrm{com},k}[n] \leq E_{\rm com}^{\rm max}, \forall \; k \in \{0,1,2\}.   & &  
\end{alignat*}
Constraint $\mathrm{ C1}$ specifies the maximum and minimum allowed flying altitude, denoted by $z_{\mathrm{max}}$ and $z_{\mathrm{min}}$,  respectively. Constraint $\mathrm{ C2}$ ensures that the beam footprint of the master {\ac{uav}} is centered around $\mathbf{p}_t[n]$. Constraint $\mathrm{ C3}$ specifies the minimum and maximum slave look angle, denoted by $\theta_{\rm min}$ and $\theta_{\rm max}$, respectively.  Constraint $\mathrm{ C4}$ ensures {a minimum safety distance $d_{\mathrm{min}}$  between {any two} \acp{uav}}. Constraint $\mathrm{C5}$ imposes a minimum radar swath width $S_{\mathrm{min}}$. Constraints $\mathrm{C6}$ and $\mathrm{C7}$  ensure minimum thresholds for \ac{snr} and baseline decorrelation,  $\gamma_{\rm SNR}^{\mathrm{min}}$ and $\gamma_{\rm Rg}^{\mathrm{min}}$, respectively. Constraint $\mathrm{C8}$ imposes a minimum \ac{hoa}, $h_{\mathrm{amb}}^{\mathrm{min}}$, {required for} phase unwrapping \cite{coherence1}.  {Constraint $\mathrm{ C9}$ imposes a maximum {communication transmit power},  $P_{\mathrm{com}}^{\mathrm{max}}$.}  Constraint $\mathrm{ C10}$ ensures {the} minimum {required} sensing data rate for $U_k$, $R_{\mathrm{min},k}, \forall k \in \{0,1,2\}$. Constraint $\mathrm{ C11}$ limits the  total communication energy of  $U_k$ to  $E_{\mathrm{com}}^{\rm max}, \forall k \in \{0,1,2\}$.\par 
Problem $\mathrm{(P)}$ is a non-convex optimization problem. {The non-convexity is caused by the} objective function and constraints $\mathrm{C4},\mathrm{C5}$, $\mathrm{C7}$, and $\mathrm{C8}$. In fact, the height error {is simultaneously \ac{hoa}- and coherence-dependent, making} the objective function {challenging}. Moreover, {the lower bound on an Euclidean distance in  $\mathrm{C4}$ and {the} trigonometric functions in  $\mathrm{C5}$,  $\mathrm{C7}$, and $\mathrm{C8}$ make {these constraints} non-convex and difficult to {handle}.} 
\section{Solution of the Optimization Problem}
\begin{figure}
	\centering
	\ifonecolumn
	\includegraphics[width=4.5in]{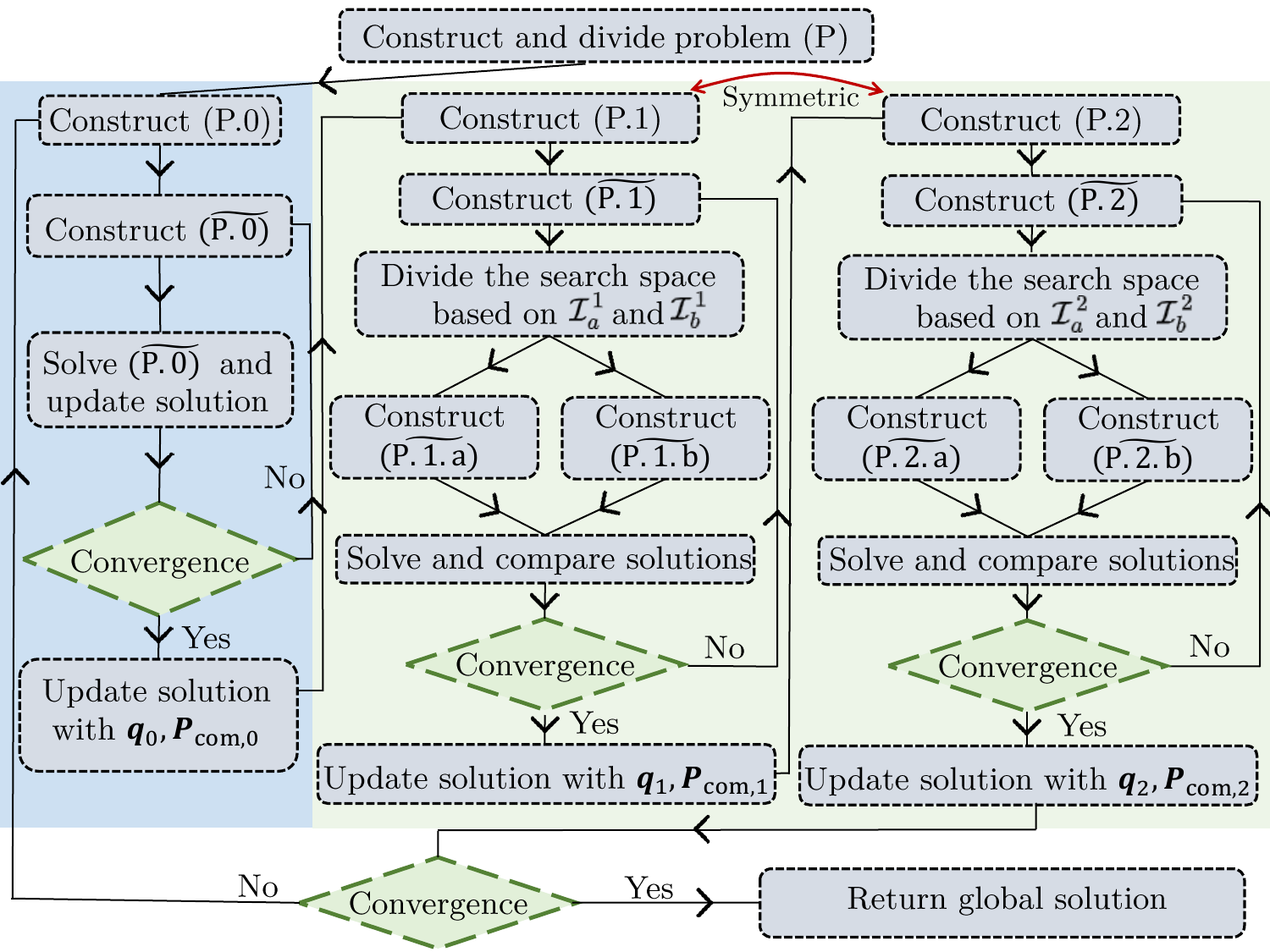}
	\else
	\includegraphics[width=0.9\columnwidth]{figures/diagram.pdf}
	\fi
	\caption{Block diagram of the proposed  solution to problem $\mathrm{(P)}$ given by \ac{ao}-based \textbf{Algorithm} \ref{alg:ao}.}
	\label{fig:diagram}
\end{figure}
To balance performance and complexity,  we propose a low-complexity sub-optimal solution {that minimizes an upper bound on problem $\mathrm{(P)}$ based on \ac{ao}}. We divide problem $\mathrm{(P)}$ into 3 sub-problems: $\mathrm{(P.0)}$, $\mathrm{(P.1)}$, and $\mathrm{(P.2)}$. In $\mathrm{(P.0)}$, we optimize the  position and communication power of $U_0$, whereas in $\mathrm{(P.1)}$ and $\mathrm{(P.2)}$, we optimize the {positions and communication powers of} $U_1$ and $U_2$, respectively. Due to symmetry, we focus on $\mathrm{(P.0)}$ and $\mathrm{(P.1)}$, as  $\mathrm{(P.2)}$ can be solved {similarly} to $\mathrm{(P.1)}$, see Figure \ref{fig:diagram}. 

\subsection{Master \ac{uav} Optimization}
In this sub-section, we optimize the position and communication transmit power of the master {\ac{uav}}, denoted by  $\mathbf{q}_0$ and $\mathbf{P}_{\rm com,0}$, respectively, for fixed $\{ \mathbf{q}_1, \mathbf{q}_2,\mathbf{P}_{\rm com,1},\mathbf{P}_{\rm com,2}\}$. The resulting sub-problem, denoted by $\mathrm{(P.0)}$, is still  non-convex due to its objective function and $\mathrm{C4}$. Yet, we leverage  \ac{sca} to provide a low-complexity solution {for} $(\mathrm{P.0})$.\newline
As the master look angle $\theta_0$ is fixed, the perpendicular baseline is independent of $\mathbf{q}_0$ and is given by \cite{amine3}: 
\begin{equation}\label{eq:perpendicular_baseline}
	b_{\bot,k}(\mathbf{q}_k)=\frac{\Big|(x_t-x_k)-\tan(\theta_0)z_k\Big|}{\sqrt{\tan(\theta_0)^2+1}}, \forall k \in \{1,2\}. 
\end{equation} 
\begin{proposition}\label{prop:objective_function_upperbound}
The height error of the final \ac{dem}, $\sigma_{h}$, can be upper bounded based on the worst-case coherence as follows: 
\begin{equation}\label{eq:heigh_error_upper_bound}
	\sigma_{h}(\mathcal{Q})\leq  \overline{\sigma_{h}}(\mathcal{Q}) \triangleq \frac{ }{}\sqrt{\frac{\lambda^2 r^2_0(\mathbf{q}_0 )\sin^2(\theta_0) (1-\mathcal{A}^2)}{  8 \pi^2 \mathcal{A}^2 n_L \left( b^2_{\bot,1}(\mathbf{q}_1)+ b^2_{\bot,2}(\mathbf{q}_2) \right)}}, 
\end{equation}
	where $\mathcal{A}= \gamma_{\rm Rg}^{\mathrm{min}} \gamma_{\rm SNR}^{\mathrm{min}} \gamma_{\rm other}$. 
\end{proposition}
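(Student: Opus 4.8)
The plan is to turn the fused-error expression \eqref{eq:final_height_error} into something transparent by exploiting the specific choice of inverse-variance weights $w_k=1/\sigma_{h_k}^2$. Substituting this choice, the numerator $\sum_{k}w_k^2\sigma_{h_k}^2$ collapses to $\sum_k w_k$, so that the fused variance reduces to the harmonic combination $\sigma_h^2 = 1/(w_1+w_2)=1/\sum_{k\in\{1,2\}}\sigma_{h_k}^{-2}$. Consequently, \emph{upper} bounding $\sigma_h$ is equivalent to \emph{lower} bounding the sum of weights $w_1+w_2$, which is the quantity I would target for the remainder of the argument.

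Next, I would make each weight explicit. Chaining \eqref{eq:height_error}, \eqref{eq:phase_error}, and \eqref{eq:height_of_ambiguity} gives
\[
\sigma_{h_k}^2=\frac{\lambda^2 r_0^2(\mathbf{q}_0)\sin^2(\theta_0)\,(1-\gamma_k^2)}{8\pi^2 n_L\,\gamma_k^2\,b_{\bot,k}^2},
\]
so that, using that $\theta_0$ is fixed and hence $b_{\bot,k}$ is independent of $\mathbf{q}_0$ by \eqref{eq:perpendicular_baseline},
\[
w_k=\frac{8\pi^2 n_L\,b_{\bot,k}^2(\mathbf{q}_k)}{\lambda^2 r_0^2(\mathbf{q}_0)\sin^2(\theta_0)}\cdot g(\gamma_k),\qquad g(\gamma)\triangleq\frac{\gamma^2}{1-\gamma^2}.
\]
The common prefactor depends on $\mathbf{q}_0$ only through $r_0$, so it factors cleanly out of the sum, leaving $b_{\bot,1}^2+b_{\bot,2}^2$ in the denominator of the final bound.

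The crux is then a monotonicity argument on $g$. A short computation shows $g'(\gamma)=2\gamma/(1-\gamma^2)^2>0$, so $g$ is strictly increasing on $(0,1)$. Since the total coherence factorizes as $\gamma_k=\gamma_{\mathrm{Rg},k}\gamma_{\mathrm{SNR},k}\gamma_{\rm other}$ and constraints $\mathrm{C6}$ and $\mathrm{C7}$ enforce $\gamma_{\mathrm{SNR},k}\ge\gamma_{\rm SNR}^{\min}$ and $\gamma_{\mathrm{Rg},k}\ge\gamma_{\rm Rg}^{\min}$, every feasible coherence satisfies $\gamma_k\ge\mathcal{A}$. Monotonicity then yields $g(\gamma_k)\ge g(\mathcal{A})$, i.e.\ each $w_k$ is lower bounded by the value obtained upon replacing $\gamma_k$ with the worst-case coherence $\mathcal{A}$; summing over $k\in\{1,2\}$, taking the reciprocal, and then the square root produces exactly \eqref{eq:heigh_error_upper_bound}.

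I expect the main obstacle to be fixing the inequality \emph{direction}: replacing the true coherence by its smallest feasible value $\mathcal{A}$ inflates the phase variance and hence the height error, which is precisely why it yields a valid upper bound rather than a lower one. A secondary point that must be verified is that all relevant coherences, and $\mathcal{A}$ itself, lie in $(0,1)$ so that $g$ is well defined and increasing there; this holds because $\mathcal{A}$ is a product of coherence factors each in $(0,1)$ and the sensing constraints are assumed feasible.
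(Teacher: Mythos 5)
Your proof is correct and takes essentially the same route as the paper's: both arguments reduce to the observation that the feasible coherence satisfies $\gamma_k \geq \mathcal{A}=\gamma_{\rm Rg}^{\mathrm{min}}\gamma_{\rm SNR}^{\mathrm{min}}\gamma_{\rm other}$ (via constraints $\mathrm{C6}$ and $\mathrm{C7}$) and then exploit monotonicity in $\gamma_k$ before substituting into the inverse-variance fusion formula \eqref{eq:final_height_error}. The only cosmetic difference is that the paper bounds $\sigma_{\Phi_k}$ and hence each $\sigma_{h_k}$ first and then fuses, whereas you first collapse the fused error to the harmonic form $\sigma_h^{-2}=\sum_k \sigma_{h_k}^{-2}$ and bound the weights; the computations are identical.
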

\begin{proof}
	\ifarxiv
	Please refer to Appendix \ref{app:objective_function_upperbound}.
	\else 
	The proposition can be proved by noting that $\gamma_k(\mathbf{q}_0,\mathbf{q}_k)\geq \gamma_{\rm Rg}^{\mathrm{min}} \gamma_{\rm SNR}^{\mathrm{min}} \gamma_{\rm other}$. Due to space limitation, the detailed proof is provided in the arxiv version of this paper \cite{arxiv}. \ifonecolumn \else\vspace{-2mm}\fi
	\fi
\end{proof}
{Therefore},  we relax the complex objective function {of $\mathrm{(P.0)}$} by minimizing instead {the} upper bound on the height error, denoted by $\overline{\sigma_{h}}$ and provided {in} \textbf{Proposition} \ref{prop:objective_function_upperbound}. In iteration $i$ of the \ac{sca} algorithm, constraint $\mathrm{C4}$ is tackled using a surrogate function for the Euclidean distance $||\mathbf{q}_0-\mathbf{q}_k||_2$ around $\mathbf{q}_0^{(i)}$ as follows  \cite{surrogate}: 
\begin{equation} \label{eq:C4_constraint}
\mathrm{\widetilde{C4}}: 2 \mathbf{q}_0^T (\mathbf{q}_0^{(i)} -\mathbf{q}_k)-|| \mathbf{q}_0^{(i)}||_2^2+|| \mathbf{q}_k||_2^2\geq d_{\rm min}, {\forall i, } \forall k \in \{1,2\}.
\end{equation}
The resulting sub-problem is denoted by $\widetilde{\mathrm{(P.0)}}$ and is given by: 
\begin{alignat*}{2} 
	&\widetilde{\mathrm{(P.0)}}:\min_{\mathbf{q}_0,\mathbf{P}_{\rm com,0}} \hspace{3mm}  \overline{\sigma_{h}}(\mathcal{Q})   & \qquad&  \\
	\text{s.t.} \hspace{3mm} &\mathrm{C1-C3,\widetilde{\mathrm{C4}}, C5,C6, C8-C11}. &       &     
\end{alignat*}
Problem $\widetilde{\mathrm{(P.0)}}$ is convex and can be solved using the Python CVXPY library \cite{cvxpy}. The solution procedure to solve $\mathrm{(P.0)}$ is summarized in \textbf{Algorithm} \ref{alg:sca1}, which converges to a {local optimum of the upper bound on problem $\mathrm{(P.0)}$,  $\overline{\sigma_h}$,} in polynomial computational time
complexity \cite{sca3.5}.  \textbf{Algorithm} \ref{alg:sca1} involves $N+2$ optimization variables, resulting in a computational complexity of $\mathcal{O}(M_0(N+2)^{3.5})$, where $M_0$ is the number of iterations needed for convergence \cite{sca3.5}.\ifonecolumn\else\vspace{-1mm}\fi
\ifonecolumn
\begin{algorithm}[]
	\caption{Successive Convex Approximation Algorithm}\label{alg:sca1}
	\begin{algorithmic}[1] 
		\label{algorithm1} \State For fixed $\{\mathbf{q}_1,\mathbf{q}_2, \mathbf{P}_{\rm com,1},\mathbf{P}_{\rm com,2}\}$, set initial point $\mathbf{q}_0^{(1)}$, iteration index $i=1$, and error tolerance $0< {\epsilon_0} \ll 1$.
		\State \textbf{repeat}  
		\State\hspace{\algorithmicindent}$ $Determine sensing {worst-case} {accuracy} $\overline{\sigma_{h}}(\mathbf{q}_0,\mathbf{q}_1,\mathbf{q}_2) , \mathbf{q}_0,$ and $ \mathbf{P}_{\mathrm{ com,0}} $ by solving $\mathrm{ \widetilde{(P.0)}}$ around point $\mathbf{q}_0^{(i)}$ using \Statex\hspace{\algorithmicindent}CVXPY       
		\State\hspace{\algorithmicindent}$ $Set $ i=i+1$ and $\mathbf{q}_0^{(i)}=\mathbf{q}_0$ 
		\State \textbf{until} $ \big |\frac{\overline{\sigma_{h}}(\mathbf{q}_0^{(i)},\mathbf{q}_1,\mathbf{q}_2)-\overline{\sigma_{h}}(\mathbf{q}_0^{(i-1)},\mathbf{q}_1,\mathbf{q}_2)}{\overline{\sigma_{h}}(\mathbf{q}_0^{(i)},\mathbf{q}_1,\mathbf{q}_2)}\big|<  {\epsilon_0} $
		\State \textbf{return} solution \{$\mathbf{q}_0, \mathbf{P}_{\mathrm{ com,0}}$\}
	\end{algorithmic}
\end{algorithm}
\else\begin{algorithm}[t]
	\caption{Successive Convex Approximation Algorithm}\label{alg:sca1}
	\begin{algorithmic}[1] 
		\label{algorithm1} \State For fixed $\{\mathbf{q}_1,\mathbf{q}_2, \mathbf{P}_{\rm com,1},\mathbf{P}_{\rm com,2}\}$, set initial point $\mathbf{q}_0^{(1)}$, iteration index $i=1$, and error tolerance $0< {\epsilon_0} \ll 1$.
		\State \textbf{repeat}  
		\State\hspace{\algorithmicindent}$ $Determine sensing worst-case accuracy $\overline{\sigma_{h}}(\mathbf{q}_0,\mathbf{q}_1,\mathbf{q}_2) , \mathbf{q}_0,$ and $ \mathbf{P}_{\mathrm{ com,0}} $ by solving $\mathrm{ \widetilde{(P.0)}}$ around point $\mathbf{q}_0^{(i)}$ using \Statex\hspace{\algorithmicindent}CVXPY       
		\State\hspace{\algorithmicindent}$ $Set $ i=i+1$ and $\mathbf{q}_0^{(i)}=\mathbf{q}_0$ 
		\State \textbf{until} $ \big |\frac{\overline{\sigma_{h}}(\mathbf{q}_0^{(i)},\mathbf{q}_1,\mathbf{q}_2)-\overline{\sigma_{h}}(\mathbf{q}_0^{(i-1)},\mathbf{q}_1,\mathbf{q}_2)}{\overline{\sigma_{h}}(\mathbf{q}_0^{(i)},\mathbf{q}_1,\mathbf{q}_2)}\big|<  {\epsilon_0} $
		\State \textbf{return} solution \{$\mathbf{q}_0, \mathbf{P}_{\mathrm{ com,0}}$\}
	\end{algorithmic}
\end{algorithm}
\fi
\subsection{Slave \ac{uav} Optimization} 
{Next}, we optimize the position and communication transmit power of slave {\ac{uav}} $U_1$, denoted by  $\mathbf{q}_1$ and $\mathbf{P}_{\rm com,1}$, respectively, for fixed $\{ \mathbf{q}_0, \mathbf{q}_2,\mathbf{P}_{\rm com,0},\mathbf{P}_{\rm com,2}\}$.  The resulting problem, denoted by $\mathrm{(P.1)}$, 
is non-convex due {to} {the} objective function and constraints $\mathrm{C4}, \mathrm{C5}, \mathrm{C7},$ and $\mathrm{C8}$. To tackle this sub-problem, we employ {again} \ac{sca}. First, we {adopt} the upper bound $\overline{\sigma_h}$ provided by \textbf{Proposition} \ref{prop:objective_function_upperbound}. Furthermore, it can be shown that minimizing $\overline{\sigma_{h}}$ for fixed $\mathbf{q}_0$ and $\mathbf{q}_2$ is equivalent to maximizing the perpendicular baseline $b_{\bot,1}(\mathbf{q}_1)$. Moreover, {in each \ac{sca} iteration $j$,} non-convex  constraint $\mathrm{C4}$ is replaced with convex constraint $ \widetilde{\widetilde{\mathrm{C4}}}$ based on a surrogate function similar to (\ref{eq:C4_constraint}). Constraint $\mathrm{C5}$ is convexified based on {a} Taylor approximation around point $\mathbf{q}_1^{(j)}$ as follows: 
\begin{equation} 
\mathrm{\widetilde{C5}}: r_1^2(\mathbf{q}_1^{(j)})+ \left(\frac{\partial r_1^2}{\partial \mathbf{q}_1 } (\mathbf{q}_1^{(j)}) \right)^T(\mathbf{q}_1-\mathbf{q}_1^{(j)}) \geq \frac{S_{\rm min} z_1 }{ \Theta_{\rm 3dB}}, {\forall j}. 
\end{equation}
The resulting problem is given by: 
\begin{alignat*}{2} 
	&\mathrm{\widetilde{(P.1)}}:\max_{\mathbf{q}_1,\mathbf{P}_{\rm com,1}} \hspace{3mm}  	b_{\bot,1}(\mathbf{q}_1) & \qquad&  \\
	\text{s.t.} \hspace{3mm} &\mathrm{C1,C3,\widetilde{\widetilde{C4}}, \widetilde{C5}, {C6}-C11}. &       &     
\end{alignat*}
Yet, the {expressions} {for} the perpendicular baseline  in (\ref{eq:perpendicular_baseline}) and  {for} the baseline {decorrelation} still present obstacles {for} solving $\mathrm{\widetilde{(P.1)}}$. Thus, we divide the search space of problem $\mathrm{\widetilde{(P.1)}}$ into two disjoint sets, denoted by $\mathcal{I}_a^k=\{\mathbf{q}_k; \theta_0 \geq \theta_k (\mathbf{q}_k) \}$ and  $\mathcal{I}^k_b=\{\mathbf{q}_k; \theta_0 < \theta_k (\mathbf{q}_k) \}, k \in \{1,2\}$. The solution that maximizes $\overline{\sigma_h}$ over $\mathcal{I}^1_a$ and $\mathcal{I}^1_b$ is selected, see Figure \ref{fig:diagram}. 
\begin{proposition}\label{prop:baseline_decorrelation}
	Constraint $\mathrm{C7}$ is equivalent to the following convex constraints:
	\begin{align}
		\quad 
		\begin{cases}
			\mathrm{C7a}:	z_k \alpha_a -(x_t-x_k) \leq 0, \text{ if } \mathbf{q}_k \in \mathcal{I}_{a}^k\\
			\mathrm{C7b}: (x_t-x_k)-z_k\alpha_b \leq 0,  \text{ if } \mathbf{q}_k \in \mathcal{I}_{b}^k
		\end{cases}, \forall k \in \{1,2\},
	\end{align} 
	where $\alpha_a=\tan\left( \arcsin \left( \frac{2-h(\gamma_{\rm Rg}^{\rm min})}{h(\gamma_{\rm Rg}^{\rm min})}\sin(\theta_0)\right)\right)$, $\alpha_b=\tan\left( \arcsin \left( \frac{ h(\gamma_{\rm Rg}^{\rm min})}{2-h(\gamma_{\rm Rg}^{\rm min})}\sin(\theta_0)\right)\right)${, and function $h(x)=\frac{xB_p  -2-B_p}{B_p-2-xB_p}$}. 
\end{proposition}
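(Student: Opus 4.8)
The plan is to collapse the two-term expression for $\gamma_{\mathrm{Rg},k}$ in (\ref{eq:baseline_decorrelation}) into a single rational function of the scalar $\mathcal{X}(\mathbf{q}_k)$, invert that function, and then unfold the piecewise definition of $\mathcal{X}$ in (\ref{eq:baseline_decorrelation_X}) into geometry. First I would use $\frac{1}{1+\mathcal{X}^{-1}}=\frac{\mathcal{X}}{1+\mathcal{X}}$ to rewrite the bracket over the common denominator $1+\mathcal{X}$, obtaining $g(\mathcal{X})\triangleq\frac{(2+B_p)-(2-B_p)\mathcal{X}}{B_p(1+\mathcal{X})}$. A one-line quotient-rule computation gives $g'(\mathcal{X})=-\frac{4}{B_p(1+\mathcal{X})^2}<0$, so $g$ is strictly decreasing for $\mathcal{X}>0$. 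Since the numerator of $\mathcal{X}$ always equals twice the larger of $\sin(\theta_0)$ and $\sin(\theta_k)$, we have $\mathcal{X}\ge 1$ throughout the search space, and $g(1)=1$; this confirms $\gamma_{\mathrm{Rg},k}\le 1$ and pins us in the physically meaningful, monotone branch.

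Because $g$ is strictly monotone, $g(\mathcal{X})\ge\gamma_{\rm Rg}^{\rm min}$ is equivalent to $\mathcal{X}\le g^{-1}(\gamma_{\rm Rg}^{\rm min})$. Concretely, I would cross-multiply (the denominator $B_p(1+\mathcal{X})$ is positive) and solve the resulting \emph{linear} inequality in $\mathcal{X}$, which yields $\mathcal{X}\le\frac{(2+B_p)-\gamma_{\rm Rg}^{\rm min}B_p}{(2-B_p)+\gamma_{\rm Rg}^{\rm min}B_p}$. Multiplying numerator and denominator of the stated $h(\cdot)$ by $-1$ shows this right-hand side is exactly $h(\gamma_{\rm Rg}^{\rm min})$, so constraint $\mathrm{C7}$ reduces to the single scalar condition $\mathcal{X}(\mathbf{q}_k)\le h(\gamma_{\rm Rg}^{\rm min})$.

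Next I would substitute the two branches of (\ref{eq:baseline_decorrelation_X}). On $\mathcal{I}_a^k$ (where $\theta_0\ge\theta_k$) one has $\mathcal{X}=\frac{2\sin(\theta_0)}{\sin(\theta_0)+\sin(\theta_k)}$, and $\mathcal{X}\le h$ rearranges to $\sin(\theta_k)\ge\frac{2-h}{h}\sin(\theta_0)$; on $\mathcal{I}_b^k$ (where $\theta_0<\theta_k$) one has $\mathcal{X}=\frac{2\sin(\theta_k)}{\sin(\theta_0)+\sin(\theta_k)}$, giving $\sin(\theta_k)\le\frac{h}{2-h}\sin(\theta_0)$, writing $h$ for $h(\gamma_{\rm Rg}^{\rm min})$. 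Since the look angles lie in $[\theta_{\min},\theta_{\max}]\subset(0,\pi/2)$ where $\sin$, $\arcsin$, and $\tan$ are all increasing and $z_k>0$, I can apply $\arcsin$ then $\tan$ to both inequalities and insert $\tan(\theta_k(\mathbf{q}_k))=\frac{x_t-x_k}{z_k}$, turning them into $\frac{x_t-x_k}{z_k}\ge\alpha_a$ and $\frac{x_t-x_k}{z_k}\le\alpha_b$ with $\alpha_a,\alpha_b$ exactly as defined. Clearing the positive factor $z_k$ yields the affine constraints $\mathrm{C7a}$ and $\mathrm{C7b}$, which are convex (in fact linear) in $(x_k,z_k)$.

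I expect the algebra to be routine; the main obstacle is rigorously justifying the monotone inversions. Specifically, I must verify that the denominators $(2-B_p)+\gamma_{\rm Rg}^{\rm min}B_p$ and $2-h$ stay positive and that the $\arcsin$ arguments $\frac{2-h}{h}\sin(\theta_0)$ and $\frac{h}{2-h}\sin(\theta_0)$ remain in $[0,1]$ over the feasible parameter range, both of which follow from $h(\gamma_{\rm Rg}^{\rm min})\in[1,2)$ (a consequence of $g(1)=1$, $\gamma_{\rm Rg}^{\rm min}\le 1$, and the monotonicity of $g$). I would also record that the boundary $\theta_0=\theta_k$ gives $\mathcal{X}=1$ in \emph{both} branches, so the two cases agree there and the partition of the search space into $\mathcal{I}_a^k$ and $\mathcal{I}_b^k$ is consistent.
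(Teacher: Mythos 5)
Your proof is correct and follows essentially the same route as the paper's Appendix C: rewrite the baseline decorrelation as a monotonically decreasing function of $\mathcal{X}$, invert it to reduce $\mathrm{C7}$ to $\mathcal{X}(\mathbf{q}_k)\leq h(\gamma_{\rm Rg}^{\rm min})$, split according to the two branches of $\mathcal{X}$, and apply $\arcsin$ and $\tan$ with $\tan(\theta_k)=\frac{x_t-x_k}{z_k}$ to obtain the affine constraints. Your version is in fact slightly more careful than the paper's (explicit derivative, the sharper bound $\mathcal{X}\geq 1$, and the observation that one must check $h(\gamma_{\rm Rg}^{\rm min})<2$ and that the $\arcsin$ arguments stay in $[0,1]$, caveats the paper leaves implicit).
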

\begin{proof}
	\ifarxiv
	Please refer to Appendix \ref{app:baseline_decorrelation}. 
	\else 
	This proposition can be proved by deriving the inverse function of the baseline decorrelation \ac{wrt} $\mathcal{X}$, see (\ref{eq:baseline_decorrelation}). Please refer to the arxiv version of this paper \cite{arxiv} for the detailed proof. 
	\fi
\end{proof}
Constraint $\mathrm{C7}$ is transformed based on \textbf{Proposition} \ref{prop:baseline_decorrelation}. Then, for $\mathbf{q}_1 \in \mathcal{I}^1_a$, constraint $\mathrm{C7}$ is replaced by $\mathrm{C7a}$ and the resulting problem is denoted by  $\mathrm{\widetilde{(P.1.a)}}$.  {Simlarly, sub-problem $\mathrm{\widetilde{(P.1.b)}}$ denotes  sub-problem $\mathrm{\widetilde{(P.1)}}$ for $\mathbf{q}_1 \in \mathcal{I}^1_b$, {where} constraint $\mathrm{C7b}$ replaces $\mathrm{C7}$.} \par
The proposed \ac{sca} algorithm to solve  $\mathrm{(P.1)}$ is omitted due to space limitation, but is similar to \textbf{Algorithm} \ref{alg:sca1}, where the convex approximations  $\mathrm{ (\widetilde{P.1.a})}$ and $\mathrm{ (\widetilde{P.1.b})}$ are solved in parallel using CVXPY \cite{cvxpy}, {with precision $\epsilon_1=\epsilon_0$.} The algorithm converges to a local optimum of the upper bound on sub-problem $\mathrm{(P.1)}$ {entailing} computational complexity $\mathcal{O}(2 M_1(N+2)^{3.5})$, where $M_1$ {is} the required number of iterations \cite{sca3.5}.
\ifonecolumn
\begin{algorithm}[h]
	\caption{Alternating Optimization Algorithm}\label{alg:ao}
	\begin{algorithmic}[1] 
		\State Set initial formation $\mathcal{Q}^{(1)}=\{\mathbf{q}_0^{(1)},\mathbf{q}_1^{(1)},\mathbf{q}_2^{(1)}\}$, initial  communication transmit powers $\mathcal{P}^{(1)}=\{\mathbf{P}^{(1)}_{\mathrm{com,0}},\mathbf{P}^{(1)}_{\mathrm{com,1}},\mathbf{P}^{(1)}_{\mathrm{com,2}}\}$, iteration index $m=1$, and error tolerance $0< {\epsilon_2}  \ll 1$.
		\State \textbf{repeat}
		\State\hspace{\algorithmicindent}$ $ Set $ m=m+1$
		\State\hspace{\algorithmicindent}$ $ Determine $ \overline{\sigma_h}(\mathbf{q}_0,\mathbf{q}_1^{(m-1)},\mathbf{q}_2^{(m-1)})$ and set $\mathbf{q}_0^{(m)}=\mathbf{q}_0$ and  $\mathbf{P}^{(m)}_{\mathrm{com,0}}=\mathbf{P}_{\mathrm{com,0}}$ {after} solving $\rm (P.0)$ for  \Statex\hspace{\algorithmicindent} fixed $\{\mathbf{q}_1^{(m-1)},\mathbf{q}_2^{(m-1)},\mathbf{P}^{(m-1)}_{\mathrm{com,1}},\mathbf{P}^{(m-1)}_{\mathrm{com,2}}\}$  using \textbf{Algorithm} \ref{alg:sca1}   
				\State\hspace{\algorithmicindent}$ $ Determine $ \overline{\sigma_h}(\mathbf{q}_0^{(m-1)},\mathbf{q}_1,\mathbf{q}_2^{(m-1)})$ and set $\mathbf{q}_1^{(m)}=\mathbf{q}_1$ and  $\mathbf{P}^{(m)}_{\mathrm{com,1}}=\mathbf{P}_{\mathrm{com,1}}$  {after} solving $\rm (P.1)$\Statex\hspace{\algorithmicindent} for fixed  $\{\mathbf{q}_0^{(m-1)},\mathbf{q}_2^{(m-1)},\mathbf{P}^{(m-1)}_{\mathrm{com,0}},\mathbf{P}^{(m-1)}_{\mathrm{com,2}}\}$  using \ac{sca}
	\State\hspace{\algorithmicindent}$ $ Determine $ \overline{\sigma_h}(\mathbf{q}_0^{(m-1)},\mathbf{q}_1^{(m-1)},\mathbf{q}_2)$ and set $\mathbf{q}_2^{(m)}=\mathbf{q}_2$ and  $\mathbf{P}^{(m)}_{\mathrm{com,2}}=\mathbf{P}_{\mathrm{com,2}}$  {after} solving $\rm (P.2)$ \Statex\hspace{\algorithmicindent} for fixed $\{\mathbf{q}_0^{(m-1)},\mathbf{q}_1^{(m-1)},\mathbf{P}^{(m-1)}_{\mathrm{com,0}},\mathbf{P}^{(m-1)}_{\mathrm{com,1}}\}$  using \ac{sca}
		\State \textbf{until} $\big |\frac{ \overline{\sigma_h}(\mathbf{q}_0^{(m)},\mathbf{q}_1^{(m)},\mathbf{q}_2^{(m)})-\overline{\sigma_h}(\mathbf{q}_0^{(m-1)},\mathbf{q}_1^{(m-1)},\mathbf{q}_2^{(m-1)})}{\overline{\sigma_h}(\mathbf{q}_0^{(m)}, \mathbf{q}_1^{(m)},\mathbf{q}_2^{(m)})}\big|\leq {\epsilon_2} $
		\State \textbf{return} solution $\{\mathcal{Q},\mathcal{P}\}= \{\mathbf{q}_0^{(m)},\mathbf{q}_1^{(m)}, \mathbf{q}_2^{(m)},\mathbf{P}^{(m)}_{\mathrm{com,0}},\mathbf{P}^{(m)}_{\mathrm{com,1}},\mathbf{P}^{(m)}_{\mathrm{com},2}\}$
	\end{algorithmic}
\end{algorithm} 
\else 

\begin{algorithm}[h]
	\caption{Alternating Optimization Algorithm}\label{alg:ao}
	\begin{algorithmic}[1] 
		\State Set initial formation $\{\mathbf{q}_0^{(1)},\mathbf{q}_1^{(1)},\mathbf{q}_2^{(1)}\}$ and initial  communication transmit powers $\{\mathbf{P}^{(1)}_{\mathrm{com,0}},\mathbf{P}^{(1)}_{\mathrm{com,1}},\mathbf{P}^{(1)}_{\mathrm{com,2}}\}$, iteration index $m=1$, and error tolerance $0< {\epsilon_2}  \ll 1$.
		\State \textbf{repeat}
		\State\hspace{\algorithmicindent}$ $Set $ m=m+1$
		\State\hspace{\algorithmicindent}$ $Determine $ \overline{\sigma_h}(\mathbf{q}_0,\mathbf{q}_1^{(m-1)},\mathbf{q}_2^{(m-1)})$ and set $\mathbf{q}_0^{(m)}=\mathbf{q}_0$ \Statex\hspace{\algorithmicindent}and  $\mathbf{P}^{(m)}_{\mathrm{com,0}}=\mathbf{P}_{\mathrm{com,0}}$ by solving $\rm (P.0)$ for  fixed \Statex\hspace{\algorithmicindent}$\{\mathbf{q}_1^{(m-1)},\mathbf{q}_2^{(m-1)},\mathbf{P}^{(m-1)}_{\mathrm{com,1}},\mathbf{P}^{(m-1)}_{\mathrm{com,2}}\}$ using \textbf{Algorithm} \ref{alg:sca1}   
		\State\hspace{\algorithmicindent}$ $Determine $ \overline{\sigma_h}(\mathbf{q}_0^{(m-1)},\mathbf{q}_1,\mathbf{q}_2^{(m-1)})$ and set $\mathbf{q}_1^{(m)}=\mathbf{q}_1$ \Statex\hspace{\algorithmicindent}and  $\mathbf{P}^{(m)}_{\mathrm{com,1}}=\mathbf{P}_{\mathrm{com,1}}$  by solving $\rm (P.1)$ for fixed  \Statex\hspace{\algorithmicindent}$\{\mathbf{q}_0^{(m-1)},\mathbf{q}_2^{(m-1)},\mathbf{P}^{(m-1)}_{\mathrm{com,0}},\mathbf{P}^{(m-1)}_{\mathrm{com,2}}\}$  using \ac{sca}
		\State\hspace{\algorithmicindent}$ $Determine $ \overline{\sigma_h}(\mathbf{q}_0^{(m-1)},\mathbf{q}_1^{(m-1)},\mathbf{q}_2)$ and set $\mathbf{q}_2^{(m)}=\mathbf{q}_2$ \Statex\hspace{\algorithmicindent}and  $\mathbf{P}^{(m)}_{\mathrm{com,2}}=\mathbf{P}_{\mathrm{com,2}}$  by solving $\rm (P.2)$  for fixed \Statex\hspace{\algorithmicindent}$\{\mathbf{q}_0^{(m-1)},\mathbf{q}_1^{(m-1)},\mathbf{P}^{(m-1)}_{\mathrm{com,0}},\mathbf{P}^{(m-1)}_{\mathrm{com,1}}\}$  using \ac{sca}
		\State \textbf{until} $\big |\frac{ \overline{\sigma_h}(\mathbf{q}_0^{(m)},\mathbf{q}_1^{(m)},\mathbf{q}_2^{(m)})-\overline{\sigma_h}(\mathbf{q}_0^{(m-1)},\mathbf{q}_1^{(m-1)},\mathbf{q}_2^{(m-1)})}{\overline{\sigma_h}(\mathbf{q}_0^{(m)}, \mathbf{q}_1^{(m)},\mathbf{q}_2^{(m)})}\big|\leq {\epsilon_2} $
		\State \textbf{return} solution $\{\mathbf{q}_0^{(m)},\mathbf{q}_1^{(m)}, \mathbf{q}_2^{(m)},\mathbf{P}^{(m)}_{\mathrm{com,0}},\mathbf{P}^{(m)}_{\mathrm{com,1}},\mathbf{P}^{(m)}_{\mathrm{com},2}\}$
	\end{algorithmic}
\end{algorithm} 

\fi
\subsection{Solution to Problem $\mathrm{(P)}$} 
To solve problem $\mathrm{(P)}$, we use \ac{ao} by solving sub-problems $\mathrm{(P.0)}$, $\mathrm{(P.1)}$, and $\mathrm{(P.2)}$ iteratively, see Figure \ref{fig:diagram}. In \textbf{Algorithm} \ref{alg:ao}, we summarize all steps used to solve problem  $\mathrm{(P)}$. \textbf{Algorithm} \ref{alg:ao} converges to a local optimum {of the worst-case height error, $\overline{\sigma_h}$,} with time complexity $\mathcal{O}(M_2(2M_1+M_0)(N+2)^{3.5})$, where $M_2$ is the required number of iterations \cite{sca3.5}.\section{Simulation Results}
\begin{table}[]
	\centering
	\caption{System parameters \cite{victor2,snr_equation,coherence1}.}
	\label{tab:my-table}
	\begin{adjustbox}{max width=\columnwidth}
 		\begin{tabular}{|c|c?c|c?c|c|}
	\hline
	Parameter           & Value 					& Parameter & Value 						& Parameter &Value \\ \hline
	$N$               &$80$ 					&$E_{\mathrm{com}}^{\rm max}$&594 J								&$\lambda$&  0.12 m \\ \hline	
	$\delta_t$ 		  &1 s    				&$R_{\mathrm{ min},0}$   &10 Mbit/s   				&$\tau_p\times\mathrm{PRF}$    & 10$^{-4}$  			 \\ \hline
	$z_{\mathrm{min}}$&  1 m                   			&$R_{\mathrm{ min},1}$   &16.95 Mbit/s				&$\theta_d$        & 45°    \\ \hline
	$z_{\mathrm{max}}$&  100 m   					&$R_{\mathrm{ min},2}$   &1 Mbit/s	        		&$T_{\mathrm{sys}}$ & 400 K   \\ \hline
	$h_{\mathrm{amb}}^{\mathrm{min}}$&1.2 m     			&$B_{c,k}, \forall k$   & 1 GHz					&$B_{\mathrm{Rg}}$ &3 GHz\\ \hline
	$x_t$             &20 m       					&$\gamma$          &18.69 dB 	    				&$F$    & 5 dB	  \\ \hline
	$g_x $        &70 m   						&$\gamma_{\mathrm{Rg}}^{\mathrm{min}}$  &0.8		   		&$L$    & 6 dB  \\ \hline
	$g_y$   	&149.37 m      	 				&$\gamma_{\mathrm{SNR}}^{\mathrm{min}}$&$0.8$ 	 			&$f_0$    & 2.5 GHz \\ \hline
	$g_z$  	& 25 m							&$ \gamma_{\rm other} $&    0.6 								&$\epsilon_0=\epsilon_1=\epsilon_2$  &10$^{-2}$ 		\\ \hline
	$d_{\mathrm{min}}$&1.5 m 				 	&$\sigma_0$    &-10 dB							&$\theta_{3\mathrm{dB}}$  &33.44°\\ \hline
	$S_{\rm  min}$   & 55 m			 			&$P_t$			&$26.02$ dBm 		   		& $\theta_{\rm min}$   & 37.24° 	\\ \hline
	$v_y$&  4.3 m/s 						& $G_t$   & 5 dBi				      					&$\theta_{\rm max}  $		&48.7° $  $ \\ \hline
	$P_{\mathrm{com}}^{\mathrm{max}}$&10.1 dB 			& $G_r$   & 5 dBi									&$n_L$&4 				   \\ \hline
\end{tabular}
	\end{adjustbox} \vspace{-4mm}
\end{table}
This section presents simulation results for \textbf{Algorithm} \ref{alg:ao}, using  parameters from Table \ref{tab:my-table}, unless stated otherwise. To evaluate performance, we {adopt} the next benchmark schemes: 
\begin{itemize}
\item{\textbf{Benchmark scheme 1:}} Here, a single-baseline \ac{uav}-\ac{insar} system  consisting only of $U_0$ and $U_1$ is considered \cite{amine3}. The upper bound on  the height error $\sigma_{h_1}$ is minimized based on a two-step \ac{ao} algorithm.
\item{\textbf{Benchmark scheme 2:}} In this scheme, we fix the position of the master \ac{uav} at $\mathbf{q}_0 =\mathbf{q}_0^{\rm fixed}$, which is feasible for $\mathrm{(P)}$, and optimize the {remaining} variables.
\item \textbf{Benchmark scheme 3:} Here, we apply a static and constant communication power allocation (i.e., $P_{\mathrm{com},k}[n] = \frac{E_{\rm com}^{\rm max}}{N}, \forall n, \forall k$), and optimize the {remaining} variables.
\end{itemize}

\begin{figure}[]
	\centering
	\ifonecolumn
	\includegraphics[width=3.5in]{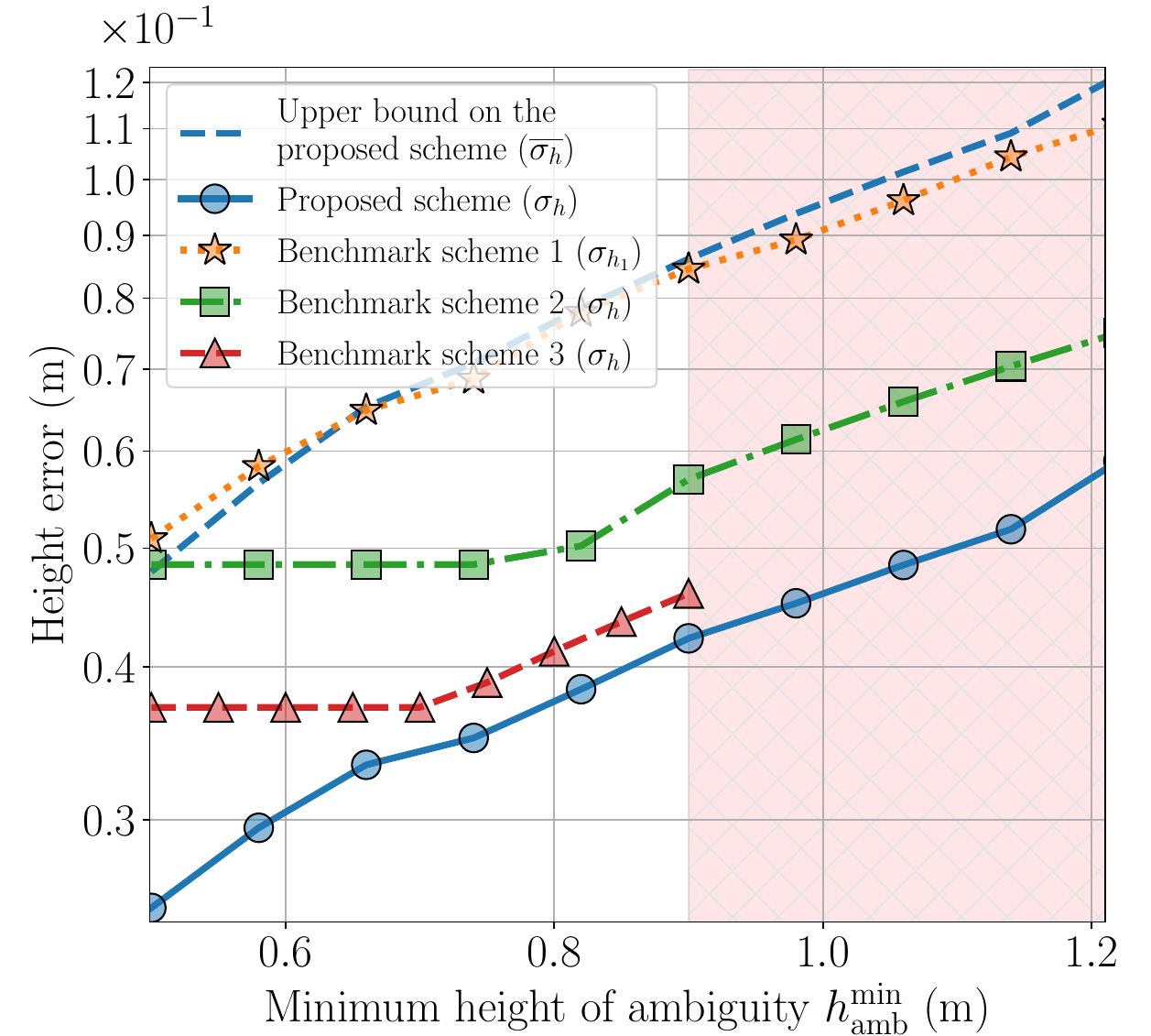}
	\else
	\includegraphics[width=0.82\columnwidth]{figures/ErrorVsHmin2.pdf}
	\fi
	\caption{Height error of the final \ac{dem}, $\sigma_h$, and its upper bound, $\overline{\sigma_h}$, versus the minimum \ac{hoa}, $h_{\rm amb}^{\rm min}$, for $\mathbf{q}_0^{\rm fixed}=(-54,74)^T$ m {and $P_{\rm com}^{\rm max}=10.1$ dB}. }
	\label{fig:error_vs_minimum_hoa}
\end{figure}
In Figure \ref{fig:error_vs_minimum_hoa}, we present the height error of the final \ac{dem}  $\sigma_h$ and its upper bound $\overline{\sigma_h}$ versus the minimum \ac{hoa}, $h_{\rm amb}^{\rm min}$. The figure shows that the sensing accuracy degrades with stricter requirements on the minimum \ac{hoa}, which is due to the relation between the \ac{hoa} and the height error, see (\ref{eq:height_error}). We note that the tightness of the upper bound on the height error in (\ref{eq:heigh_error_upper_bound}) increases with stricter sensing requirements, i.e., if $\gamma_{\rm SNR}^{\rm min}\gamma_{\rm Rg}^{\rm min} \gamma_{\rm other} \to 1$, then $\sigma_h \to \overline{\sigma_h}$. Nevertheless, Figure \ref{fig:error_vs_minimum_hoa}
{reveals that even for system parameters, for which} the upper bound is not tight, it is a useful metric for optimization. {In fact,} the proposed scheme consistently achieves a gain of at least 49\% compared to benchmark scheme 1. {This gain is {due} to {the} averaging {of} the height error, which improves the sensing accuracy and highlights the importance of using multiple \acp{uav} {for} acquisition.} Additionally, optimizing the \ac{uav} formation {enables} the proposed solution to outperform benchmark scheme 2, with a minimum gain of 23.6\%, {for the system parameters considered in Figure \ref{fig:error_vs_minimum_hoa}}. Compared to benchmark scheme 3, the proposed solution achieves a gain of at least 8.2\%. {Furthermore}, the static power allocation, adopted in benchmark scheme 3, leads to infeasibility of problem $\mathrm{(P)}$ starting from $h_{\rm amb}^{\rm min}=0.9 $ m, indicated by the red colored region in Figure \ref{fig:error_vs_Rmin}.\par
\ifonecolumn \begin{figure}[H]
	\else\begin{figure}[]\fi
	\centering
	\ifonecolumn
	\includegraphics[width=3.5in]{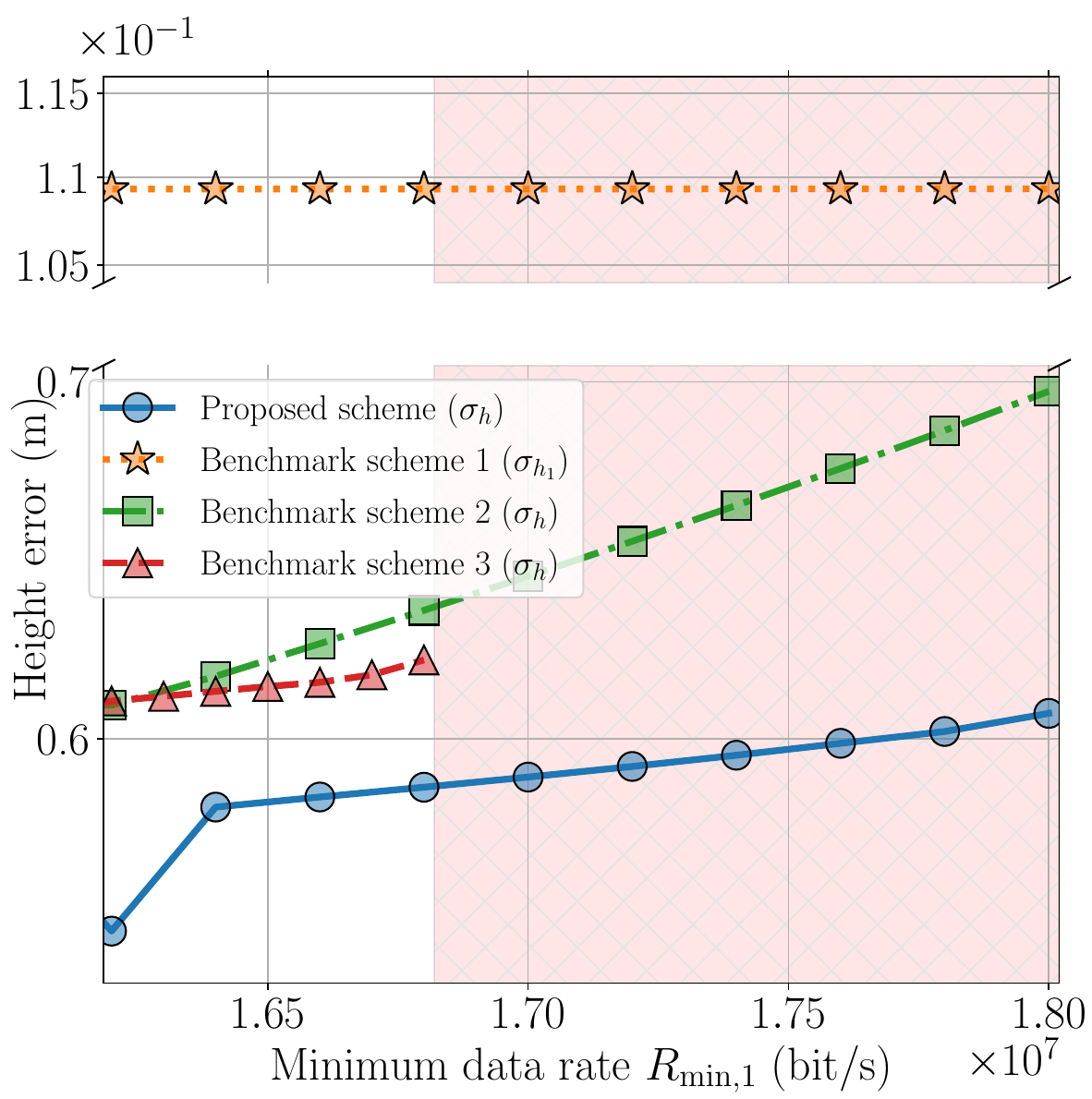}
	\else
	\includegraphics[width=0.79\columnwidth]{figures/ErrorVsRmin2.pdf}
	\fi
	\caption{Height error of the final \ac{dem}, $\sigma_h$, versus the minimum data rate $R_{\rm min,1}$, for $\mathbf{q}_0^{\rm fixed}=(-41,61)^T$ m and $P_{\rm com}^{\rm max}=9$ dB.}
	\label{fig:error_vs_Rmin}
\end{figure}
{Figure \ref{fig:error_vs_Rmin} depicts the final height error versus the minimum data rate of $U_1$}. The figure shows that, except for benchmark scheme 1,  higher data rate requirements lead to {worse accuracy}. This can be explained by $U_1$'s extended range when lower {data rates} are {sufficient}. In contrast, increasing the data rate {requirement} limits the $U_1$-\ac{gs} distance, which leads {to the activation} of constraint $\mathrm{C10}$. {This in turn}  affects the perpendicular baseline and, therefore, the height error for {the considered} dual-baseline schemes. Furthermore, {the infeasibility of baseline scheme 3 for required rates exceeding $R_{\rm min,1}=16.8$ Mbit/s,} highlights the need to properly allocate the communication power to ensure real-time data offloading to {the} \ac{gs}. Figure \ref{fig:error_vs_Rmin} confirms that the proposed {scheme} {outperforms} all benchmark schemes.
\section{Conclusion}
In this work, we studied a dual-baseline \ac{uav}-based \ac{insar} system using a swarm of three drones to generate two {independent} \acp{dem} of a target area. The final \ac{dem} is obtained {with} a weighted {averaging} technique, improving sensing precision. We proposed a low-complexity algorithm that minimizes {an upper bound on the} height estimation error of the final \ac{dem} by jointly optimizing the \ac{uav} formation and communication power allocation, while meeting sensing and communication constraints. Simulation results showed that the proposed scheme significantly improves sensing accuracy compared to single-baseline systems and other benchmark schemes.

\appendices
\ifarxiv
\section{Bi-static \ac{snr} Approximation} \label{app:bistatic}
In this appendix, we provide the detailed steps for deriving the bi-static \ac{snr} approximation in (\ref{eq:bistatic_snr}). To this end, we start with the bi-static radar \ac{snr} expression for distributed targets, denoted by $\mathrm{SNR}_r$ and given by \cite{book1}: 
\begin{equation}
	\mathrm{SNR}_r=\frac{\sigma_0 A_{\rm res} P_t G_t G_r \lambda^2}{(4\pi)^3 R^2_{\rm Tx}R^2_{\rm Rx}k_b B_{\rm noise} T_{\rm sys}  F L},
\end{equation}
where $R_{\rm Tx}$ and $R_{\rm Rx}$ represent the radar transmit and receive slant ranges, respectively, $A_{\rm res}$ is the size of one resolution cell, {i.e., the size of the smallest area the \ac{sar} system can distinguish}, $B_{\rm noise}$ is the effective noise bandwidth, and $L$ {denotes} the total radar losses. After coherent integration of the radar signal to form the \ac{sar} image, the radar \ac{snr} is improved by a factor $N_{\rm proc}$ as follows:
\begin{equation}
	\mathrm{SNR}=	\mathrm{SNR}_rN_{\rm proc}=\mathrm{SNR}_rN_{\rm pulse} N_a,
\end{equation}  
where $N_{\rm pulse}$ is the number of samples per pulse and $N_a$ is the number of samples within the synthetic aperture length, which is denoted by $L_a$. Note that $N_{\rm pulse}=\tau_p B_{\rm Rg} \approx \tau_p B_{\rm noise}$ and $N_a=\frac{L_a \mathrm{PRF}}{v_y}$. Thus, the image \ac{snr} can be written as: 
\begin{equation} \label{eq:SNR_image}
	\mathrm{SNR}=\frac{\sigma_0 A_{\rm res} P_t G_{t} G_{r} \lambda^2  \tau_p\mathrm{PRF} L_a}{(4\pi)^3 R^2_{\rm Tx}R^2_{\rm Rx} v_y k_b T_{\rm sys} F L}.
\end{equation} 
The  \ac{snr} expression depends on the resolution cell size $A_{\rm res}$, which is related to {the} range and azimuth resolutions, denoted by $\delta_r$ and $\delta_a$, respectively. For a mono-static system with slant range $R_{\rm Tx}$ and incidence angle $\theta$,  $A_{\rm res}=\delta_r \delta_a=\frac{\lambda R_{\rm Tx} c}{4 B_{\rm Rg} L_a \sin(\theta)}$ leads to the expression in (\ref{eq:monostatic_snr}) \cite{snr_equation}. {However,} calculating  $A_{\rm res}$ for the bi-static case is more complex. According to \cite{bistatic_emerging}, {in this case,} the ground range resolution is given by: 
\begin{equation}
	\delta_r=\frac{c}{B_{\rm Rg} \left(\sin(\theta_{\rm Tx})+\sin(\theta_{\rm Rx})\right)},
\end{equation}
where $\theta_{\rm Tx}$ and $\theta_{\rm Rx}$ denote the transmit and receive incidence angles, respectively. Furthermore, the azimuth resolution is given by: 
\begin{equation}
	\delta_a= l \frac{R_{\rm Rx}}{R_{\rm Tx}+R_{\rm Rx}}.
\end{equation}
In the case of across-track interferometry applications, since a lower bound is imposed on the baseline decorrelation by constraint $\mathrm{C7}$, the bi-static angle $\Delta \theta = |\theta_{\rm Tx}-\theta_{\rm Rx}|$ is relatively small (i.e., on the order of few degrees). Thus, the range resolution {can be} approximated by $\delta_r \approx \frac{c}{2 B_{\rm Rg} \sin(\theta_{\rm Tx})}$ and $\sigma_0(\theta_{\rm Rx})\approx \sigma_0(\overbrace{\theta_{\rm Tx}}^{\text{fixed angle}})=\sigma_0$. Based on the bi-static effective range $R_{\rm eff}=\frac{R_{\rm Rx}+R_{\rm Tx}}{2}$, we approximate the azimuth resolution as: 
\begin{equation}
	\delta_a\approx \frac{\lambda R_{\rm eff} R_{\rm Rx}}{L_a (R_{\rm Tx}+R_{\rm Rx})},
\end{equation}
which leads to {the following} approximation {of} the resolution cell size:
\begin{equation} \label{eq:resolution_size_approximation}
		A_{\rm res} =\delta_r \delta_a\approx \frac{\lambda R_{\rm Rx} c }{4 L_a B_{\rm Rg} \sin(\theta_{\rm Tx})}.
\end{equation}
Finally, the approximated bi-static \ac{snr} expression is obtained by {inserting} (\ref{eq:resolution_size_approximation}) {in} (\ref{eq:SNR_image}). 

\else
\fi
\ifarxiv
\section{Proof of Proposition \ref{prop:objective_function_upperbound}} \label{app:objective_function_upperbound}
It can be shown that $\sigma_{\Phi_k}$ is monotonically decreasing \ac{wrt} the total coherence $\gamma_k$. Therefore, given the {worst-case} coherence value $\mathcal{A}= \gamma_{\rm Rg}^{\mathrm{min}} \gamma_{\rm SNR}^{\mathrm{min}} \gamma_{\rm other}$, the following inequality holds: 
\begin{equation}\label{eq:phase_upper_bound}
	\sigma_{\Phi_k}(\mathbf{q}_0,\mathbf{q}_k) \leq \frac{1}{ \mathcal{A}}\sqrt{\frac{1-\mathcal{A}^2}{2n_L}}, \forall k \in \{1,2\}.
\end{equation}
Based on (\ref{eq:height_error}), (\ref{eq:perpendicular_baseline}), and (\ref{eq:phase_upper_bound}), we construct an upper bound on the height error $\sigma_{h_k}$ of interferometric pair $(U_0,U_k)$ as follows:
\begin{equation}\label{eq:objective_function_upperbound}
	\sigma_{h_k}(\mathbf{q}_0,\mathbf{q}_k)\leq  \frac{ \lambda r_0(\mathbf{q}_0 )\sin(\theta_0)  }{2 \pi b_{\bot,k}(\mathbf{q}_k)\mathcal{A}}\sqrt{\frac{1-\mathcal{A}^2}{2n_L}}, \forall k \in \{1,2\}.
\end{equation}
The proof is concluded by inserting the upper bound  (\ref{eq:objective_function_upperbound}) into the expression for the height error of the  final \ac{dem}, given in (\ref{eq:final_height_error}).
\else
\fi
\ifarxiv
\section{Proof of Proposition \ref{prop:baseline_decorrelation}} \label{app:baseline_decorrelation}
Let function $f(x)=\frac{1}{B_p} \left[ \frac{2+B_p}{1+x} -\frac{2-B_p}{1+\frac{1}{x}} \right]$ such that $f(\mathcal{X}(\mathbf{q}_k))$ represents the baseline decorrelation. Note that $\forall \mathbf{q}_k, 0\leq \mathcal{X}(\mathbf{q}_k)\leq 2$, therefore, we focus on function $f$ in the domain $[0 ,2]$. It can be shown that function $f$ is a decreasing and invertible function such that:
\begin{equation}
	f^{-1}(x)=h(x)= \frac{B_p x -2-B_p}{B_p-2-xB_p}.
\end{equation}  
Therefore, constraint $\mathrm{C7}$ is equivalent to the following constraint: 
\begin{equation}\label{eq:constraint_C7_equivalent}
	\mathrm{C7}: f(\mathcal{X}(\mathbf{q}_k)) \geq \gamma_{\rm Rg}^{\rm min} \iff \mathcal{X}(\mathbf{q}_k) \leq h(\gamma_{\rm Rg}^{\rm min} ).
\end{equation}
Based on (\ref{eq:baseline_decorrelation_X}) and (\ref{eq:constraint_C7_equivalent}), constraint $\mathrm{C7}$ is equivalent to the following constraints: 
\begin{equation}\label{eq:app_C7_equivalence}
 \begin{dcases}
		\mathrm{C7a}:\sin(\theta_k(\mathbf{q}_k))\geq \frac{2-h(\gamma_{\rm Rg}^{\rm min})}{h(\gamma_{\rm Rg}^{\rm min})}\sin(\theta_0), \text{if } \mathbf{q}_k \in \mathcal{I}_{a}^k,\\
		 \mathrm{C7b}:\sin(\theta_k(\mathbf{q}_k)) \leq \frac{ h(\gamma_{\rm Rg}^{\rm min})}{2-h(\gamma_{\rm Rg}^{\rm min})}\sin(\theta_0),  \text{if } \mathbf{q}_k \in \mathcal{I}_{b}^k.
	\end{dcases}
\end{equation}
Note that constraints $\mathrm{C7a}$ and $\mathrm{C7b}$ are in general non-convex, however, knowing that  $0 \leq \theta_k(\mathbf{q}_k)\leq \frac{\pi}{2}, \forall \mathbf{q}_k, \forall k$, then we can apply increasing functions $\arcsin$ and  $\tan$ to (\ref{eq:app_C7_equivalence}), which results in the following equivalent convex constraints: 
\ifonecolumn
\begin{equation}
	\begin{dcases}
		\mathrm{C7a}:	\frac{x_t-x_k}{z_k}\geq\tan\left(  \arcsin \left( \frac{\left(2-h(\gamma_{\rm Rg}^{\rm min})\right)\sin(\theta_0)}{h(\gamma_{\rm Rg}^{\rm min})}\right)\right), \text{if } \mathbf{q}_k \in \mathcal{I}_{a}^k, {\forall k,}\\
			\mathrm{C7b}: \frac{x_t-x_k}{z_k}\leq \tan\left( \arcsin \left( \frac{ h(\gamma_{\rm Rg}^{\rm min})\sin(\theta_0)}{2-h(\gamma_{\rm Rg}^{\rm min})}\right)\right),  \text{if } \mathbf{q}_k \in \mathcal{I}_{b}^k, {\forall k.}
	\end{dcases}
\end{equation}
\else
\begin{equation}
	\begin{dcases}
		\mathrm{C7a}:	\frac{x_t-x_k}{z_k}\geq\tan\left(  \sin^{-1} \left( \frac{\left(2-h(\gamma_{\rm Rg}^{\rm min})\right)\sin(\theta_0)}{h(\gamma_{\rm Rg}^{\rm min})}\right)\right), \\ \text{if } \mathbf{q}_k \in \mathcal{I}_{a}^k,\\
		\mathrm{C7b}: \frac{x_t-x_k}{z_k}\leq \tan\left( \sin^{-1} \left( \frac{ h(\gamma_{\rm Rg}^{\rm min})\sin(\theta_0)}{2-h(\gamma_{\rm Rg}^{\rm min})}\right)\right),  \\ \text{if } \mathbf{q}_k \in \mathcal{I}_{b}^k.
	\end{dcases}
\end{equation}
\fi
\else
\fi

\bibliographystyle{IEEEtran}
\bibliography{biblio}

\end{document}